\def\eqref#1{equation~\ref{#1}}
\def\1{\bm{1}}
\DeclareMathAlphabet{\mathsfit}{\encodingdefault}{\sfdefault}{m}{sl}
\SetMathAlphabet{\mathsfit}{bold}{\encodingdefault}{\sfdefault}{bx}{n}
\newtheorem{theorem}{Theorem}[section]
\newtheorem{proposition}[theorem]{Proposition}
\theoremstyle{definition}
\theoremstyle{remark}
\title{Redistributing Rewards Across Time and Agents for Multi-Agent Reinforcement Learning}
\author[1]{Aditya Kapoor\thanks{\texttt{Email: aditya.kapoor@postgrad.manchester.ac.uk}}}
\author[2]{Kale-ab Tessera}
\author[3]{Harshad Khadilkar}
\author[3]{Mayank Baranwal}
\author[4]{Jan Peters}
\author[2]{Stefano Albrecht}
\author[1]{Mingfei Sun}
\affil[1]{University of Manchester}
\affil[2]{University of Edinburgh}
\affil[3]{IIT, Bombay}
\affil[4]{TU Darmstadt}
\begin{document}

\maketitle

\begin{abstract}
Credit assignment—disentangling each agent's contribution to a shared reward—is a critical challenge in cooperative multi-agent reinforcement learning (MARL). To be effective, credit assignment methods must preserve the environment's optimal policy. Some recent approaches attempt this by enforcing return equivalence, where the sum of distributed rewards must equal the team reward. However, their guarantees are conditional on a learned model's regression accuracy, making them unreliable in practice. We introduce Temporal-Agent Reward Redistribution (TAR²), an approach that decouples credit modeling from this constraint. A neural network learns unnormalized contribution scores, while a separate, deterministic normalization step enforces return equivalence by construction. We demonstrate that this method is equivalent to a valid Potential-Based Reward Shaping (PBRS), which guarantees the optimal policy is preserved regardless of model accuracy. Empirically, on challenging SMACLite and Google Research Football (GRF) benchmarks, TAR² accelerates learning and achieves higher final performance than strong baselines. These results establish our method as an effective solution for the agent-temporal credit assignment problem.
\href{https://github.com/AdityaKapoor74/MARL_Agent_Temporal_Credit_Assignment}{Github Code}
\end{abstract}

\section{Introduction}
\label{sec:introduction}
MARL \citep{marl-book} is a powerful paradigm for solving complex cooperative tasks, with landmark successes in domains ranging from logistics and robotics to challenging games \citep{krnjaic2024scalable, sartoretti2019primal, Vinyals2019GrandmasterLI, kurach2020google}. In these settings, teams of agents must learn to synchronize their actions to achieve a shared goal.

Despite this progress, a key bottleneck is the multi-agent credit assignment problem: allocating a shared team reward to guide agent learning. The challenge is amplified in episodic MARL where a single feedback signal arrives only at a trajectory's end. Here, agents struggle to resolve two coupled problems: when their critical contributions occurred (temporal credit assignment) and which agents were responsible (agent credit assignment).

For credit assignment methods to be effective, they should not alter the environment's optimal policy. Some recent approaches, like STAS and AREL, attempt this by enforcing a return equivalence constraint, where distributed rewards must sum to the team reward. However, this approach is theoretically fragile They train a single model to both predict credit and satisfy this constraint simultaneously. Consequently, their policy-invariance guarantee is conditional on the network being a perfect regressor—a condition practically unattainable. Any prediction error breaks the guarantee and risks leading to suboptimal policies.

To solve this fragility, we introduce Temporal-Agent Reward Redistribution (TAR²), a method designed for structural robustness, which we illustrate in Figure~\ref{fig:tar2_architecture}. The core idea is to \textbf{decouple credit modeling from constraint satisfaction}. A neural network performs a single, focused task: learning unnormalized scores representing the \textit{absolute} importance of each agent's actions. A separate, deterministic normalization step then makes this importance \textit{relative}, converting the scores into a valid probability distribution used to construct the final rewards. This guarantees return equivalence by construction, irrespective of the network's accuracy. We demonstrate that this two-step architecture is equivalent to a valid Potential-Based Reward Shaping (PBRS) \citep{ng1999policy}, which provides the formal guarantee that the optimal policy is preserved.

\section{Background}
\label{sec:background}

We formalize the cooperative MARL problem as a Decentralized Partially Observable Markov Decision Process (Dec-POMDP)~\citep{Oliehoek2016ACI} and review Potential-Based Reward Shaping (PBRS)~\citep{ng1999policy}, the theoretical foundation for our approach.

\subsection{Problem Formulation}
\label{subsec:dec_pomdp}
A fully cooperative multi-agent task is a Dec-POMDP~\citep{Oliehoek2016ACI, amato2024partial}, defined by the tuple $\mathcal{M} = \langle \mathcal{S}, \{\mathcal{A}_i\}_{i=1}^{N}, \mathcal{P}, \{\Omega_i\}_{i=1}^{N}, \mathcal{O}, \mathcal{R}, \rho_0, \gamma, N \rangle$. The environment consists of $N$ agents interacting in a global state space $\mathcal{S}$ with an initial state distribution $\rho_0$. At each timestep $t$, the team takes a joint action $\mathbf{a}_t = \{a_{1,t}, \dots, a_{N,t}\}$ from the joint action space $\mathbf{\mathcal{A}} \! = \! \! \times_{i=1}^N \mathcal{A}_i$, which causes a transition to a new state $s_{t+1}$ according to $\mathcal{P}(s_{t+1} | s_t, \mathbf{a}_t)$. Due to partial observability, agent $i$ receives only a local observation $o_{i,t} \in \Omega_i$ from the observation function $\mathcal{O}$.

Because a single observation is often insufficient to disambiguate the true state, each agent must condition its policy on its local action-observation history, $\tau_{i,t} = (o_{i,0}, a_{i,0}, \dots, o_{i,t-1}, a_{i,t-1}, o_{i,t})$ \citep{amato2024partial}. For notational convenience, we denote the histories of all other agents as $\tau_{-i,t} = \{\tau_{j,t}\}_{j \neq i}$, and the joint history of all agents as $\boldsymbol{\tau}_t = (\tau_{1,t}, \dots, \tau_{N,t})$. In a decentralized execution setting, each agent learns a policy $\pi_i(a_{i,t} | \tau_{i,t})$ that depends only on its own history. 
The agents' policies combine to form a joint policy $\boldsymbol{\pi} = \prod_{i=1}^{N} \pi_i$. The team receives a shared reward $r_t = \mathcal{R}(s_t, \mathbf{a}_t)$ and, with a discount factor $\gamma$, aims to learn a joint policy $\boldsymbol{\pi}$ that maximizes the expected discounted return: $J(\boldsymbol{\pi}) = \mathbb{E}_{\tau \sim \boldsymbol{\pi}, \rho_0}\left[\sum_{t=0}^{T} \gamma^t r_t\right]$, where $\boldsymbol{\tau}$ denotes a full episode trajectory. To learn effective decentralized policies in large state-action spaces, we adopt the Centralized Training with Decentralized Execution (CTDE) paradigm \citep{foerster2018counterfactual, lowe2017multi}. During centralized training, agents leverage global information to guide learning, while at test-time they execute policies using only local information.

We focus on the challenging episodic setting, where reward is only dispensed at the end of an episode $r_t=0$ for $t<T$, and $r_t = R(s_T)$ is a terminal reward. This sparsity makes credit assignment exceptionally difficult, as agents must deduce which actions in a long trajectory led to the final outcome from an episodic signal. While policies are executed decentrally, we assume a centralized training paradigm where a shaping function can access the joint history $\boldsymbol{\tau}_t$ to guide learning.

\subsection{Potential-Based Reward Shaping}
\label{subsec:pbrs}
To create dense rewards without distorting the underlying problem, we leverage PBRS~\citep{ng1999policy}. In single-agent RL, PBRS augments the environment's reward $\mathcal{R}(s, a)$ by an additional shaping reward, $F(s, a, s') = \gamma\Phi(s') - \Phi(s)$, 
using a potential function $\Phi: \mathcal{S} \to \mathbb{R}$ to improve the speed of convergence. The key property of PBRS is that it guarantees \textit{policy invariance} -- any optimal policy in the shaped-reward MDP remains optimal in the original MDP. This guarantee extends to the fully cooperative multi-agent setting, preserving the set of optimal joint policies~\citep{devlin2014potential} and forming a theoretically sound basis for credit assignment.

\begin{figure*}[!t]
    \centering
    \includegraphics[width=1.\textwidth]{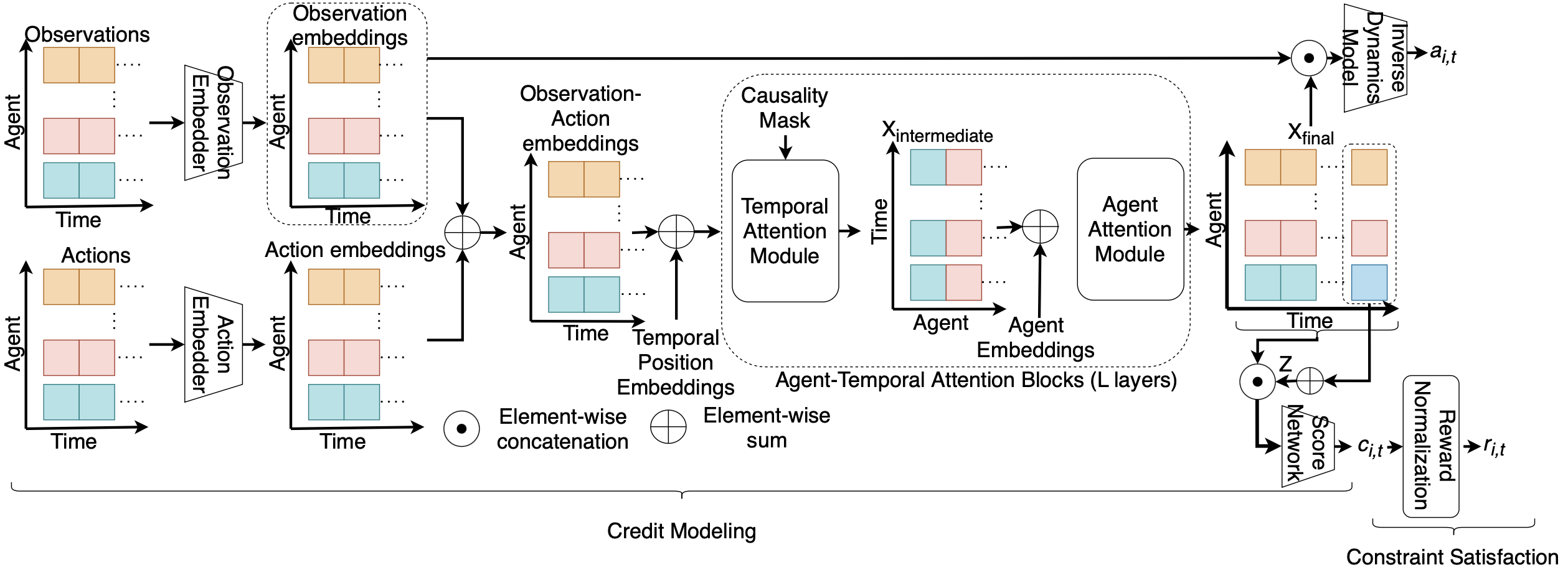}
    \caption{The TAR² architecture processes trajectory data through four main stages. (1) Input sequences are converted into embeddings with positional encoding. (2) A multi-layer transformer block with sequential Temporal and Agent Attention builds context-aware representations, regularized by an auxiliary Inverse Dynamics task to ensure causality. (3) The Score Network computes unnormalized scores by conditioning each timestep's representation on a learned Final Outcome Embedding (Z). (4) A final Probabilistic Normalization step converts these scores and the global reward $R(s_T)$ into dense, per-agent rewards $\{r_{i,t}\}$ that satisfy strict return equivalence.}
    \label{fig:tar2_architecture}
\end{figure*}

\section{Temporal-Agent Reward Redistribution}
\label{sec:approach}

As established, PBRS provides a sound theoretical basis for policy-preserving credit assignment. However, its practical application is not straightforward. The guarantee of policy invariance holds for \textit{any} potential function, but an unconstrained or poorly learned one can introduce significant noise and high variance into the policy gradients, destabilizing and slowing down convergence.

Our method, Temporal-Agent Reward Redistribution (TAR²), is designed to harness the guarantees of PBRS while mitigating these practical instabilities. The core idea is to decouple credit modeling from constraint satisfaction. As illustrated in Figure~\ref{fig:tar2_architecture}, a neural network performs a single, focused task: learning unnormalized contribution scores. A separate, deterministic normalization step then constructs the final rewards, guaranteeing return equivalence by construction. In this section, we formalize this design, prove its theoretical guarantees, analyze its learning dynamics, and detail the architecture that operationalizes these principles.

\subsection{Reward Redistribution Formulation}
\label{subsec:rew_red_form}
We formalize our two-step process as follows. The reward model, parameterized by $\theta$, learns unnormalized contribution scores, $c_{i,t}$. We then use a deterministic shift-and-normalize scheme to convert these scores into weights. The weights convert the scores to final shaped rewards, $r_{i,t}$, enforcing strict return equivalence ($\textstyle \sum_{t,i} r_{i,t} = R(s_T)$) by construction. This structural guarantee is critical for reducing the variance in the learning signal (Sec 3.4.1), distinguishing TAR² from prior work where policy invariance is a fragile and conditional property. First, we compute temporal weights,
\begin{equation}
    w_{t}^{\text{temp}} = \frac{c_{t}^{\text{agg}} - \min_{t'} c_{t'}^{\text{agg}}}{\sum_{t''}(c_{t''}^{\text{agg}} - \min_{t'} c_{t'}^{\text{agg}}) + \epsilon},
    \label{eq:temporal_norm}
\end{equation}
by normalizing aggregated scores ($c_{t}^{\text{agg}} = \textstyle \sum_i c_{i,t}$) across the trajectory.
We then compute agent-specific weights, 
\begin{equation}
    w_{i,t}^{\text{agent}} = \frac{c_{i,t} - \min_{j} c_{j,t}}{\sum_{k}(c_{k,t} - \min_{j} c_{j,t}) + \epsilon},
    \label{eq:agent_norm}
\end{equation}
by normalizing the individual scores within each timestep.
The indices iterate over active agents and timesteps, and $\epsilon$ (e.g. 1e-8) is a small constant for numerical stability. The final redistributed reward is then constructed as
\begin{equation}
% r_{i,t} = w_{t}^{\text{temp}} w_{i,t}^{\text{agent}} R(s_T).
s_{i,t} = w_{t}^{\text{temp}} w_{i,t}^{\text{agent}} R(s_T).
\label{eq:reward_redist}
\end{equation}

\subsection{Optimal Policy Preservation}
\label{subsec:pbrs_guarantee}

We prove TAR\textsuperscript{2} preserves the optimal policy by framing it within multi-agent PBRS. Let $r_{i,t}^{\text{orig}}$ be agent $i$'s ground-truth contribution to the team reward $r_t^{\text{orig}}$. Our model produces a dense, shaped reward $s_{i,t} = w_t^{\text{temp}} w_{i,t}^{\text{agent}} R(s_T)$. While the complete potential-based reward is $r'_{i,t} = s_{i,t} + r_{i,t}^{\text{orig}}$, TAR\textsuperscript{2} uses only the shaped reward $s_{i,t}$ for learning. This is a critical design choice: because $r_{i,t}^{\text{orig}}$ is zero for all $t<T$ in the episodic setting, dropping it eliminates a sparse, high-variance signal. As we mathematically justify in Sec 3.4.1, this significantly reduces variance and stabilizes learning.

\begin{proposition}[Optimal Policy Preservation]
Let $\mathcal{M}_{\text{env}}$ be a Dec-POMDP where agent $i$ receives reward $r_{i,t}^{\text{orig}}$. Let $\mathcal{M}_{\text{TAR}^2}$ be an identical environment where agent $i$ receives the augmented reward $r'_{i,t} = r_{i,t}^{\text{orig}} + s_{i,t}$. Any joint policy $\pi^*$ optimal in $\mathcal{M}_{\text{TAR}^2}$ is also optimal in $\mathcal{M}_{\text{env}}$.
\end{proposition}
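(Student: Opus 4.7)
My plan is to recognise the TAR\textsuperscript{2} shaping $s_{i,t}$ as a per-agent potential-based reward transform with a history-indexed potential, and then invoke the cooperative multi-agent extension of Ng et al.\ (1999) due to Devlin \& Kudenko (2011), which states that such a transform preserves the set of optimal joint policies. I would work in the history-MDP reformulation of the Dec-POMDP that is already implicit in Section~\ref{subsec:dec_pomdp}, so that the potential is allowed to depend on the action--observation history $\boldsymbol{\tau}_t$ (the joint history is available during centralised training).

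For each agent $i$ and each realised trajectory define
\begin{equation*}
\Phi_i(\boldsymbol{\tau}_t) \;=\; -\sum_{k=t}^{T-1} \gamma^{k-t}\, s_{i,k}, \qquad \Phi_i(\boldsymbol{\tau}_T) = 0.
\end{equation*}
A one-line telescoping calculation then yields $\gamma\,\Phi_i(\boldsymbol{\tau}_{t+1}) - \Phi_i(\boldsymbol{\tau}_t) = s_{i,t}$ for every $t<T$, so each augmented reward decomposes as $r'_{i,t} = r_{i,t}^{\text{orig}} + \gamma\,\Phi_i(\boldsymbol{\tau}_{t+1}) - \Phi_i(\boldsymbol{\tau}_t)$, the canonical PBRS form. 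The deterministic normalisation in Equations~\ref{eq:temporal_norm}--\ref{eq:reward_redist} enforces $\sum_{t,i} s_{i,t} = R(s_T)$ by construction, which is precisely what makes $\Phi_i$ bounded and well-defined along any realised trajectory irrespective of the network's accuracy; this is the payoff of decoupling credit modelling from constraint satisfaction. Applying the multi-agent PBRS theorem then identifies the optimal joint policy sets of $\mathcal{M}_{\text{TAR}^2}$ and $\mathcal{M}_{\text{env}}$, so any $\pi^*$ optimal in the former is optimal in the latter.

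The delicate point -- and the main obstacle -- is that $\Phi_i(\boldsymbol{\tau}_t)$ as written depends on the shaping scores $s_{i,k}$ at \emph{future} times $k \geq t$, which in a strict state-based reading of Ng et al.\ (1999) would not be permitted. I would address this by (i) noting that in a Dec-POMDP the history is the effective state and policies already condition on $\tau_{i,t}$, so allowing $\Phi_i$ to depend on $\boldsymbol{\tau}_t$ is natural rather than exotic, and (ii) placing the construction in the look-ahead / dynamic-potential PBRS family (Wiewiora et al., 2003; Harutyunyan et al., 2015), for which the policy-invariance argument carries over to shaping rewards computed post-hoc on realised trajectories. The formal check that remains is to write out the Bellman operator in the history-MDP and verify that the $\Phi_i$ difference telescopes cleanly inside the expected discounted return, so that the two MDPs induce identical greedy policies in the CTDE regime.
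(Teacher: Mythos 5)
Your construction is the same in spirit as the paper's --- exhibit a per-agent, history-indexed potential whose telescoping difference equals $s_{i,t}$ and invoke the multi-agent PBRS result of Devlin and Kudenko --- and your telescoping algebra is correct: your $\Phi_i$ is the paper's $\sum_{k=0}^{t-1}s_{i,k}$ shifted by the trajectory-constant $-\sum_{k=0}^{T-1}\gamma^{k}s_{i,k}$, so both yield $F_{i,t}=s_{i,t}$. The problem is that you correctly identify the delicate point --- the potential depends on shaping scores computed from the \emph{whole} realized trajectory, including future timesteps and the terminal outcome --- and then leave it unresolved: you defer to ``look-ahead / dynamic-potential PBRS'' and state that a formal Bellman-operator check ``remains.'' Neither look-ahead advice nor dynamic PBRS licenses a potential that is a deterministic function of the entire realized trajectory, so as written the appeal to the PBRS theorem does not close. (To be fair, the paper's own $\Phi_i(\tau_t)=\sum_{k=0}^{t-1}s_{i,k}$ has the identical issue, since each $s_{i,k}$ involves normalization over all timesteps and $R(s_T)$.)

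What actually closes the argument in the paper --- and what your proof is missing --- is an elementary final step that does not rely on the PBRS machinery at all. The deterministic normalization enforces $\sum_{t,i}s_{i,t}=R(s_T)$ exactly on every trajectory, so for \emph{every} joint policy $\pi$ (with $\gamma=1$ in the episodic setting),
\[
J_{\text{TAR}^2}(\pi)\;=\;\mathbb{E}_\pi\Big[\textstyle\sum_{t,i}\big(r^{\text{orig}}_{i,t}+s_{i,t}\big)\Big]\;=\;\mathbb{E}_\pi\big[2R(s_T)\big]\;=\;2\,J_{\text{env}}(\pi).
\]
A positive constant rescaling of the objective preserves its argmax set, which is precisely the proposition. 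This identity is the load-bearing step; it renders the question of whether $\Phi_i$ is an admissible potential in the strict Ng et al.\ sense moot for the claim as stated. You invoke return equivalence only to argue that $\Phi_i$ is ``bounded and well-defined,'' which is not where its force lies. Add the $J_{\text{TAR}^2}=2J_{\text{env}}$ computation and your argument is complete, with the PBRS framing serving as motivation rather than as the proof.
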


\begin{proof}
The proof demonstrates that $s_{i,t}$ is a valid per-agent potential-based shaping reward. As established by \citet{Devlin2011TheoreticalCO}, using such individual potential functions preserves the set of optimal joint policies. For each agent $i$, we define a history-based potential function $\Phi_i(\tau_t) = \sum_{k=0}^{t-1} s_{i,k}$ (assuming $\gamma=1$ for the episodic setting). The corresponding shaping function is:
\begin{equation}
    F_{i,t} = \gamma \Phi_i(\tau_{t+1}) - \Phi_i(\tau_t) = \sum_{k=0}^{t} s_{i,k} - \sum_{k=0}^{t-1} s_{i,k} = s_{i,t}
\end{equation}
Since $s_{i,t}$ adheres to the PBRS condition, the transformation is policy-invariant. The expected total return in the shaped environment is:
\begin{equation}
    J_{\text{PBRS}}(\pi) = \mathbb{E}_{\pi}\left[\sum_{t,i} (r_{i,t}^{\text{orig}} + s_{i,t}) \right] = J_{\text{env}}(\pi) + \mathbb{E}_{\pi}\left[\sum_{t,i} s_{i,t}\right]
\end{equation}
By construction, the total shaping reward $\sum_{t,i} s_{i,t} = R(s_T)$. Since $J_{\text{env}}(\pi) = \mathbb{E}_{\pi}[R(s_T)]$, the new objective is $J_{\text{PBRS}}(\pi) = 2J_{\text{env}}(\pi)$. As this is a constant scaling of the original objective, the set of optimal policies is preserved.
\end{proof}

\subsection{Analysis of Gradient Dynamics}
\label{subsec:gradient_dir_preservation}
Beyond guaranteeing optimality, we now analyze how TAR² influences the learning dynamics. We prove that while TAR² introduces a beneficial bias to the joint policy gradient, it preserves the gradient direction for each individual agent.

\begin{proposition}[Stochastic Gradient Direction Preservation]
\label{prop:gradient_direction}
For any agent $k$ and any sampled trajectory $\tau$, the stochastic policy gradient estimate under TAR²'s rewards, 
\begin{equation}
    \hat{\mathbf{g}}_{k, \text{TAR}^2}(\tau) = \delta_k(\tau) \hat{\mathbf{g}}_{k, \text{Global}}(\tau),
\end{equation}
is proportional to the gradient estimate under the original team reward, $\hat{\mathbf{g}}_{k, \text{Global}}(\tau)$.
The scaling factor $\delta_k(\tau) \in [0,1]$ is a trajectory-dependent scalar.
\end{proposition}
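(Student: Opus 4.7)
The plan is to write both gradient estimators in their Monte Carlo (REINFORCE) form, exploit the episodic terminal-reward structure so that the scalar return factors out of the per-timestep score-function sum, and then identify the ratio of the two scalar prefactors as the required $\delta_k(\tau)$.

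First I would fix a sampled trajectory $\tau$ and, invoking the episodic assumption ($r_t=0$ for $t<T$) together with the $\gamma=1$ convention adopted in the preceding proof, write the single-trajectory REINFORCE estimate under the original team reward as
\begin{equation*}
\hat{\mathbf{g}}_{k,\text{Global}}(\tau) = \Bigl(\sum_{t} \nabla_{\theta_k} \log \pi_k(a_{k,t}\mid \tau_{k,t})\Bigr) R(s_T),
\end{equation*}
and the corresponding TAR\textsuperscript{2} estimate as
\begin{equation*}
\hat{\mathbf{g}}_{k,\text{TAR}^2}(\tau) = \Bigl(\sum_{t} \nabla_{\theta_k} \log \pi_k(a_{k,t}\mid \tau_{k,t})\Bigr) \sum_{t} s_{k,t},
\end{equation*}
since agent $k$ sees only the per-agent shaped stream $\{s_{k,t}\}$ in place of the global reward. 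Substituting Equation~\ref{eq:reward_redist} and pulling $R(s_T)$ out of the inner sum gives $\sum_t s_{k,t} = \delta_k(\tau)\, R(s_T)$ with $\delta_k(\tau) := \sum_{t} w_t^{\text{temp}}\, w_{k,t}^{\text{agent}}$, so direct substitution yields $\hat{\mathbf{g}}_{k,\text{TAR}^2}(\tau)=\delta_k(\tau)\,\hat{\mathbf{g}}_{k,\text{Global}}(\tau)$.

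The bound $\delta_k(\tau)\in[0,1]$ then follows from the normalisation built into Equations~\ref{eq:temporal_norm} and~\ref{eq:agent_norm}: the shift-by-min step makes all weights nonnegative, $\{w_t^{\text{temp}}\}_t$ sums to $1$ (modulo the $\epsilon$ regulariser), and each $w_{k,t}^{\text{agent}}\le 1$, so $\delta_k(\tau)$ is a convex combination of quantities in $[0,1]$. A useful sanity check is $\sum_k \delta_k(\tau)=\sum_t w_t^{\text{temp}}\sum_k w_{k,t}^{\text{agent}}=1$, which says that the per-agent scalings partition the global update; this is exactly the signature of return equivalence by construction.

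The main obstacle is conceptual rather than computational: the identity requires the uniform-return form of REINFORCE, in which the \emph{same} scalar return multiplies every timestep's score function. For a reward-to-go or actor-critic variant the per-timestep advantage under TAR\textsuperscript{2} would be $\sum_{t'\ge t} s_{k,t'}$, whose ratio to $R(s_T)$ varies with $t$, so no trajectory-level scalar $\delta_k(\tau)$ could realise the proportionality. I would therefore be explicit that the claim is stated for the natural Monte Carlo estimator in the terminal-reward episodic setting, and dispose of the $\epsilon$ regulariser by noting that it only perturbs the normalising denominators and leaves the $[0,1]$ bound intact up to an $O(\epsilon)$ correction that vanishes as $\epsilon\to 0$.
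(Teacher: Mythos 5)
Your proposal is correct and follows essentially the same route as the paper's proof: both write the two estimators in uniform-return REINFORCE form, factor $R(s_T)$ out of $\sum_t s_{k,t}$ via Equation~\ref{eq:reward_redist} to obtain $\delta_k(\tau)=\sum_t w_t^{\text{temp}}w_{k,t}^{\text{agent}}$, and bound it in $[0,1]$ using non-negativity and $\sum_t w_t^{\text{temp}}=1$. Your added remarks (the partition identity $\sum_k\delta_k(\tau)=1$ and the caveat that the proportionality fails for reward-to-go or actor-critic estimators) are sound observations beyond what the paper states, but the core argument is identical.
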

\begin{proof}
The policy gradient estimate for agent $k$ is the product of its score function and the total return it receives. Under the global episodic reward, $R(s_T)$, the estimate is
\begin{equation}
    \hat{\mathbf{g}}_{k, \text{Global}}(\tau) = G_k(\tau) R(s_T),
\end{equation}
with the score function $G_k(\tau) = \textstyle \sum_{t=0}^{T-1} \nabla_{\theta_k} \log \pi_{\theta_k}(a_{k,t}|\tau_{k,t})$. Under TAR², the estimate uses the agent's individual return, $R_k(\tau) = \textstyle \sum_{t=0}^{T-1} r_{k,t}$, resulting in
\begin{equation}
    \hat{\mathbf{g}}_{k, \text{TAR}^2}(\tau) = G_k(\tau) R_k(\tau).
\end{equation}
By substituting our definition of $r_{k,t}$ from Eq.~\ref{eq:reward_redist}, we find that $R_k(\tau)$ is a scaled version of the global reward
\begin{equation}
    R_k(\tau) = \left( \textstyle \sum_{t=0}^{T-1} w_{k,t}^{\text{agent}} w_{t}^{\text{temp}} \right) R(s_T).
\end{equation}
Letting $\delta_k(\tau) = \textstyle \sum_{t} w_{k,t}^{\text{agent}} w_{t}^{\text{temp}}$, the relationship becomes $\hat{\mathbf{g}}_{k, \text{TAR}^2}(\tau) = \delta_k(\tau) \hat{\mathbf{g}}_{k, \text{Global}}(\tau)$. Since weights $w_{k,t}^{\text{agent}} \in [0,1]$ and $w_{t}^{\text{temp}} \in [0,1]$ with $\sum_t w_{t}^{\text{temp}} = 1$, the scalar $\delta_k(\tau) = \sum_t w_{t}^{\text{temp}} w_{k,t}^{\text{agent}} \le \sum_t w_{t}^{\text{temp}} = 1$. As all weights are non-negative, $\delta_k(\tau) \ge 0$. Thus, $\delta_k(\tau) \in [0,1]$, and the stochastic gradient direction for agent $k$ is preserved.
\qedhere
\end{proof}

\paragraph{Implications for Joint Policy Convergence.} 
Crucially, while TAR² preserves the gradient direction for \textit{each agent individually}, the joint policy gradient, $\mathbf{G}_{\text{TAR}^2} = \sum_k \delta_k(\tau) \hat{\mathbf{g}}_k$, is not parallel to the true joint gradient, $\mathbf{G}_{\text{Global}} = \sum_k \hat{\mathbf{g}}_k$. This deviation introduces a \textit{beneficial bias} -- TAR² trades the unbiased but high-variance true gradient for a lower-variance, biased estimate that credits agents proportionally to their learned contribution. While this informed bias may lead the parameters to a different convergent point $\theta^*_{\text{TAR}^2}$ \citep{devlin2014potential}, our PBRS guarantee ensures the resulting policy, $\pi( ;\theta^*_{\text{TAR}^2})$, remains in the set of optimal policies. This provides a structural robustness that methods reliant on unconstrained regression targets lack.

\subsection{Variance Reduction Properties of TAR²}
\label{subsec:variance_analysis}
While our framework guarantees that the optimal policy is preserved (Sec \ref{subsec:pbrs_guarantee}), this alone does not guarantee efficient learning. The standard PBRS formulation allows for any potential function, which can introduce significant noise and high variance into the policy gradients, leading to slow and unstable convergence. TAR² is explicitly designed to mitigate this issue through two primary mechanisms, which we analyze below.

\subsection{Variance Reduction from Structural Constraints}
\label{subsubsec:strict_return_equivalence}
The core design choice of TAR\textsuperscript{2} is to use only the shaped reward $s_{i,t}$ for learning, rather than the full potential-based reward $r'_{i,t} = r_{i,t}^{\text{orig}} + s_{i,t}$. This choice structurally reduces the variance of the joint policy gradient estimator.

Let the score function for agent $k$ be $G_k(\tau) = \sum_{t=0}^{T-1} \nabla_{\theta_k} \log \pi_{\theta_k}(a_{k,t} | \tau_{k,t})$. The joint policy gradient estimator under a full PBRS formulation would be $\hat{g}_{\text{PBRS}}(\tau) = \sum_k G_k(\tau) (R_k^{\text{orig}}(\tau) + S_k(\tau))$, where $R_k^{\text{orig}}$ and $S_k$ are the returns for agent $k$ from the original and shaped rewards, respectively. This estimator can be decomposed into two parts: a gradient component from the original rewards, $\hat{g}_{\text{orig}}(\tau) = \sum_k G_k(\tau) R_k^{\text{orig}}(\tau)$, and the TAR\textsuperscript{2} estimator from our shaped rewards, $\hat{g}_{\text{TAR}^2}(\tau) = \sum_k G_k(\tau) S_k(\tau)$.

Using the decomposition for the variance of a sum, the variance of the full PBRS estimator is $Var(\hat{g}_{\text{PBRS}}) = Var(\hat{g}_{\text{orig}}) + Var(\hat{g}_{\text{TAR}^2}) + 2Cov(\hat{g}_{\text{orig}}, \hat{g}_{\text{TAR}^2})$. The term $Var(\hat{g}_{\text{orig}})$ is the primary source of instability. In the episodic setting, the per-agent return $R_k^{\text{orig}}(\tau)$ is a sparse, high-variance signal, making $\hat{g}_{\text{orig}}$ a noisy estimator for the joint policy gradient. Since $Var(\hat{g}_{\text{orig}})$ is a non-negative term, it follows that $Var(\hat{g}_{\text{PBRS}}) > Var(\hat{g}_{\text{TAR}^2})$.

By using only the shaped rewards $s_{i,t}$ for learning, TAR\textsuperscript{2} structurally eliminates the noisy estimator component $\hat{g}_{\text{orig}}$. This provides a denser, lower-variance signal for more stable and efficient learning, while the policy-invariance guarantee is retained because the total redistributed reward equals the original team reward ($\sum_{t,i} s_{i,t} = R(s_T)$).

\subsubsection{Variance Reduction from Final-State Conditioning}
\label{subsubsec:variance_reduction_from_final_state}
Second, our use of final-state conditioning provides a more causally-correct and lower-variance learning target. Our reward model predicts contribution scores, $c_{i,t}$, conditioned on the final outcome of the trajectory, $Z$ (e.g., the terminal state). By the Law of Total Variance, the variance of these scores can be decomposed as
\begin{equation}
\label{eq:total_variance_cramped}
    % \!\!\!\!\resizebox{0.9\linewidth}{!}{$
    \underbrace{\text{Var}(c_{i,t} | \boldsymbol{\tau}_t)}_{\text{Original Var.}} \! = \! \underbrace{\mathbb{E}[\text{Var}(c_{i,t} | \boldsymbol{\tau}_t, Z)]}_{\text{Remaining Var.}} \! + \! \underbrace{\text{Var}(\mathbb{E}[c_{i,t} | \boldsymbol{\tau}_t, Z])}_{\text{TAR² Target Var.}}
    % $}
\end{equation}
Our model's learning target is the final term, the expected contribution $\mathbb{E}[c_{i,t} | \boldsymbol{\tau}_t, Z]$. Since variance is non-negative, this proves that the variance of our learning target is less than or equal to the variance of the unconditioned signal
\begin{equation}
    \text{Var}(\mathbb{E}[c_{i,t} | \boldsymbol{\tau}_t, Z]) \le \text{Var}(c_{i,t} | \boldsymbol{\tau}_t).
\end{equation}
By learning this less noisy, post-hoc signal, TAR² benefits from a sharper and more stable learning target.

The introduction of the final outcome, $Z$, means our potential function is implicitly conditioned on information from the end of the trajectory, i.e., $\Phi(\boldsymbol{\tau}_t, Z)$. This does not invalidate the PBRS guarantees. The policy invariance proof relies on the telescoping sum of potential differences, which holds even when the potential is conditioned on a variable, $Z$, that is constant for any given trajectory \citep{devlin2014potential, arjona2019rudder}. Since this is a valid choice for the potential function's state representation, all policy preservation guarantees remain intact.

\subsection{Model Architecture and Training}
\label{subsec:architecture}
We now detail the architecture that operationalizes our TAR² method. As depicted in Figure~\ref{fig:tar2_architecture}, our model is a sequence-to-sequence network that processes the joint action-observation history, $\boldsymbol{\tau}$, to produce the unnormalized contribution scores, $c_{i,t}$.

\subsubsection{Architecture Details}
\label{subsubsec:arch_details}
Our reward model uses a dual-transformer design to capture both temporal and inter-agent dependencies. As shown in Figure~\ref{fig:tar2_architecture}, the architecture integrates three key components:
\paragraph{Final-State Conditioning.}
To ground credit assignment in the trajectory's outcome, the entire model is conditioned on an embedding of the final state, $Z$. As justified in our variance analysis (Sec \ref{subsubsec:variance_reduction_from_final_state}), this provides a provably lower-variance learning target.

\paragraph{Inverse Dynamics for Causal Representations.}
To ensure the embeddings are causally relevant, an auxiliary inverse dynamics model regularizes the shared representations and improve downstream task performance \citep{pathak2017curiosity, brandfonbrener2023inversedynamicspretraininglearns}. A separate MLP prediction head takes the concatenated latent embeddings from two consecutive timesteps, $\text{embed}(s_{i,t})$ and $\text{embed}(s_{i,t+1})$, and outputs a probability distribution, $\pi_{\text{ID}}$, over the action space. It is trained via an auxiliary loss to predict the agent's action, $a_{i,t}$. This forces the shared embeddings to encode controllable aspects of the state, preventing the credit assignment from relying on spurious correlations.

\paragraph{Deterministic Reward Normalization.}
Finally, the raw scores $c_{i,t}$ from the transformer body are passed to the deterministic normalization function (Sec \ref{subsec:rew_red_form}). This non-learned step is the structural property that guarantees the strict return equivalence required for effectively learning and preserving the environment's optimal policy.

\subsubsection{Training Objective}
\label{subsubsec:training_objective}
The reward model, parameterized by $\theta$, is trained by minimizing a composite loss function. The objective provides a strong learning signal for identifying the \textit{importance} of each agent's actions by regressing the sum of scores against the true episodic reward, while an inverse dynamics term regularizes the representations. The resulting objective
\vspace{-0.25em}
\begin{equation}
    \mathcal{L}(\theta) = \mathbb{E}_{\tau \sim \mathcal{B}} \bigg[ \underbrace{
        \left( R(s_T) - \textstyle\sum_{t,i} c_{i,t} \right)^2
    }_{\text{Reward Regression Loss}} \underbrace{
        - \lambda \textstyle\sum_{t,i} \log \pi_{\text{ID}}(a_{i,t})
    }_{\text{Inverse Dynamics Regularizer}} \bigg]
    \label{eq:training_objective}
\end{equation}
combines this regression loss with the inverse dynamics regularizer, modulated by a hyperparameter $\lambda$. This objective trains the model to assign a higher total score to better trajectories. The separate normalization step (Sec \ref{subsec:rew_red_form}) then takes these meaningfully-scaled scores and distributes their value as credit in a way that structurally guarantees policy invariance.

\begin{figure*}[!ht]
    \centering
    \includegraphics[width=0.75\linewidth]{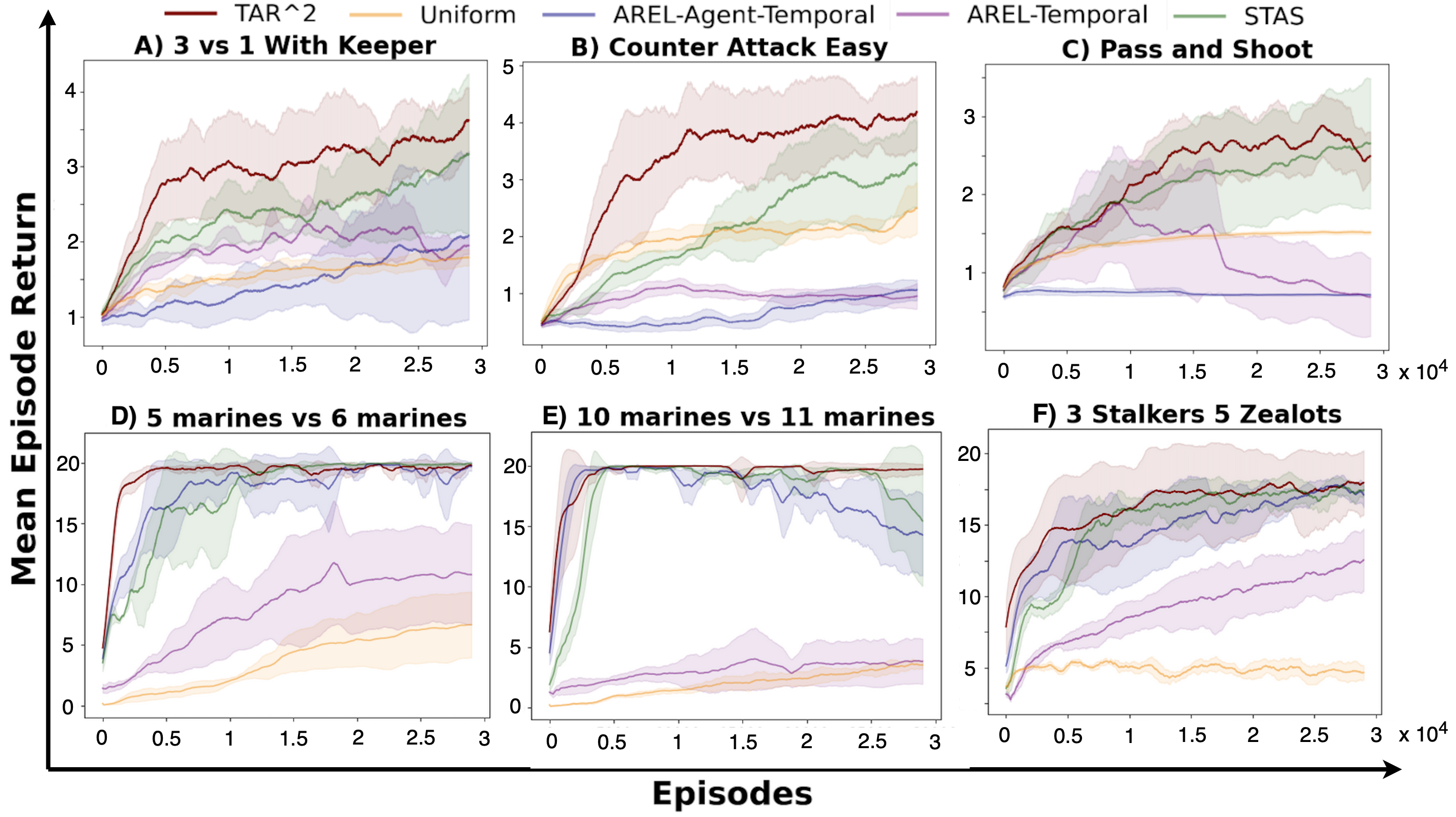}
    \caption{TAR²'s Average Return comparison against baselines on Google Research Football (A-C) and SMACLite (D-F). On SMACLite, it demonstrates improved sample efficiency compared to STAS and converges to a higher average return than the unstable AREL variants. This trend is more pronounced on GRF, where TAR² consistently achieves the highest average return, particularly in `Counter Attack Easy` and `Pass and Shoot` scenarios.}
    \label{fig:baselines}
\end{figure*}

\begin{figure*}[!ht]
    \centering
    \includegraphics[width=0.75\linewidth]{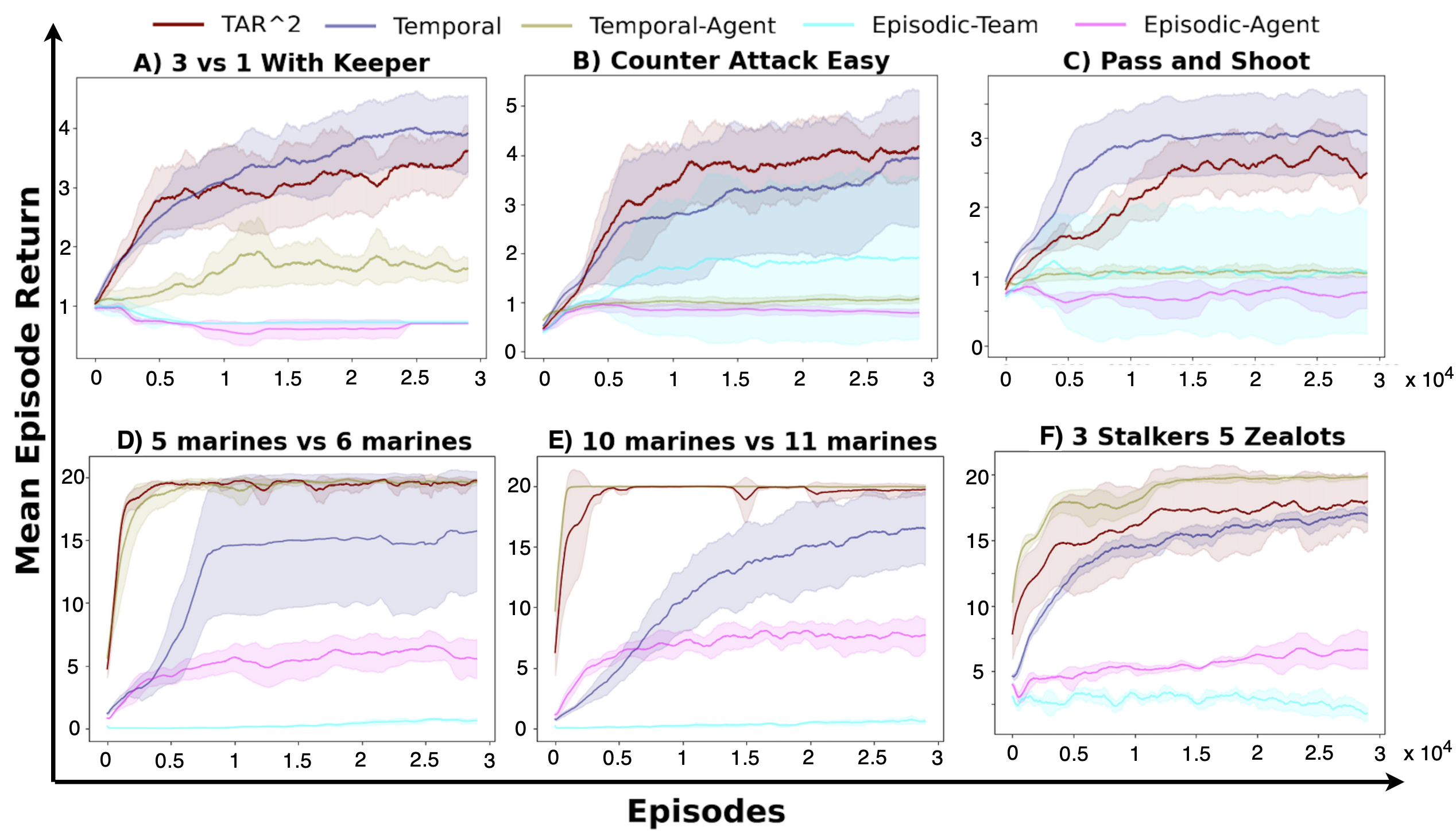}
    \caption{Performance of TAR² relative to oracle rewards on SMACLite (D-F) and Google Research Football (A-C). TAR² enables learning a policy that is competitive with hand-crafted reward functions. TAR²'s performance rivals `Temporal-Agent` in SMACLite and `Temporal` in GRF. It outperforms all other heuristics, demonstrating that a learned credit assignment can be more effective than a manually engineered one.}
    \label{fig:bounds}
\end{figure*}

\begin{figure*}[!ht]
    \centering
    \includegraphics[width=0.6\linewidth]{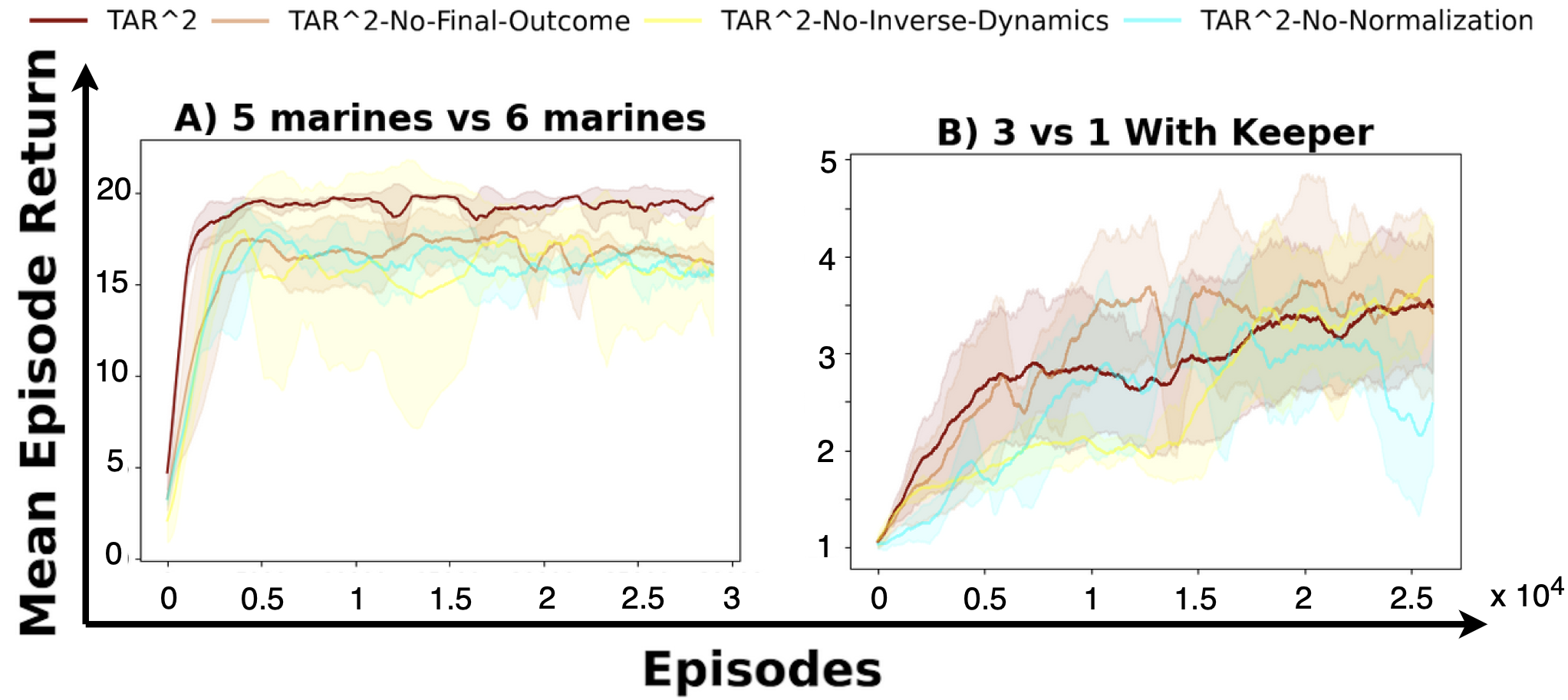}
    \caption{Ablation study of TAR²'s core components. Removing any component degrades performance. `No-Final-Outcome` increases variance, `No-Inverse-Dynamics` hinders performance, and `No-Normalization` is the most detrimental as it violates the policy preservation guarantee.}
    \label{fig:ablations}
\end{figure*}

\section{Related Work}
\label{sec:related_works}
Our work addresses the joint agent-temporal credit assignment problem. We position TAR² within the existing literature, highlighting how our design overcomes the key limitations of prior work, particularly their reliance on theoretically brittle learning schemes.

\subsection{Temporal Credit Assignment}
\label{subsec:temporal_credit_assignment}
Temporal credit assignment aims to transform a single, sparse episodic reward into a sequence of per-timestep signals. Prominent single-agent approaches include analyzing state-value differences (RUDDER), using sequence models or retrospectively re-evaluating actions \citep{arjona2019rudder, liu2019sequence, harutyunyan2019hindsight}, while others use \textit{intrinsic motivation} to generate dense rewards for exploration \citep{schaefer2022derl}. These methods, however, have two key limitations in our context. First, they are designed for single-agent problems and do not address the multi-agent nature of credit assignment. Second, approaches like intrinsic motivation intentionally alter the optimization objective to encourage exploration, whereas our goal is to preserve the original optimal policy. Even methods adapted for MARL, such as AREL \citep{xiao2022agent}, address temporal credit for the team as a whole but fail to disentangle individual agent contributions.

\subsection{Agent Credit Assignment}
\label{subsec:agent_credit_assignment}
Agent credit assignment allocates a shared team reward to individual agents, dominated by methods like value function factorization (VDN, QMIX) and counterfactuals (COMA) \citep{sunehag2017value, rashid2020monotonic, foerster2018counterfactual}. Other approaches use Shapley values or attention-based critics \citep{wang2020shapley, freed2021learning, kapoor2024assigning}. A unifying limitation of these methods is their fundamental reliance on dense, per-timestep team rewards, an assumption that fails in the challenging episodic settings we address \citep{papoudakis2021benchmarking}. In contrast, TAR² is designed specifically for a single episodic signal.

\subsection{Joint Agent-Temporal Credit Assignment}
\label{subsec:agent_temporal_credit_assignment}
Addressing both credit assignment dimensions is a key frontier in MARL. The most notable prior work, STAS \citep{Chen2023STASSR}, tackles this joint problem but its approach is theoretically brittle. It trains a model to directly predict the final shaped rewards, meaning its policy preservation guarantee is conditional on the network's regression accuracy. This reliance on perfect function approximation is the key theoretical fragility that TAR² is designed to overcome, as we detail in Sec \ref{sec:approach}.

\section{Experiments}
\label{sec:experiments}
We conduct experiments to answer three core questions: (\textbf{1}) How does TAR² compare to other reward redistribution baselines across diverse and challenging environments? (\textbf{2}) How does its performance compare against oracles with access to privileged reward information? (\textbf{3}) Which of our architectural components are most critical to its success?

\subsection{Experimental Setup}
\label{subsec:experiment_setup}

\paragraph{Environments.}
We evaluate on two challenging benchmarks, SMACLite~\citep{michalski2023smaclite} and Google Research Football (GRF)~\citep{kurach2020google}, modifying both to be strictly episodic with a single terminal team reward. Our chosen maps test distinct coordination challenges: SMACLite's \textit{5m vs 6m} and \textit{10m vs 11m} test scalability, while \textit{3s5z} tests coordination between heterogeneous agents. For GRF, we use \textit{3 vs 1 with keeper}, \textit{counterattack easy}, and \textit{pass and shoot} to assess performance across diverse strategies. Further details are in the appendix.

\paragraph{Baselines.}
For a fair comparison, all methods are built upon the same state-of-the-art MAPPO implementation \citep{yu2022surprising}. We compare against several credit assignment frameworks that operate on the same episodic team reward. These include a naive Uniform credit baseline; AREL-Temporal \citep{xiao2022agent} for temporal-only assignment; and two strong joint agent-temporal methods, STAS \citep{Chen2023STASSR}, the current state-of-the-art, and our adapted AREL-Agent-Temporal. We omit value decomposition methods like QMIX as they rely on Temporal-Difference (TD) updates, which are ill-suited for settings with only a single, delayed episodic reward \citep{gangwani2020learning}.

\paragraph{Oracle Baselines.}
To contextualize TAR²'s performance, we establish oracle baselines using privileged information from the original, dense-reward environments. The original environments provide a dense, per-timestep team reward, which we term Temporal. We then create a factorized version of this reward that provides a per-timestep, per-agent signal, termed Temporal-Agent. For our main experiments, the standard input for TAR² and all baselines is the Episodic-Team reward, which is the sum of Temporal rewards over an episode. Similarly, Episodic-Agent reward is the sum of Temporal-Agent rewards. This allows us to compare against oracles trained with more information, thereby establishing heuristic performance bounds.

\subsection{Results and Discussion}
\label{subsec:results_and_discussion}

All learning curves show the mean episode return over 5 random seeds, with shaded areas representing 95\% confidence intervals. We use average return as our primary metric to clearly analyze performance gains throughout the entire training process.

\paragraph{Q1: Performance Against Baselines.}
As shown in Figure~\ref{fig:baselines}, TAR² consistently outperforms all baselines in both final average return and sample efficiency. The results empirically validate our core thesis: the structural robustness of TAR² provides a more reliable learning signal than the theoretically brittle designs of prior work. For instance, while STAS is competitive in some SMACLite scenarios, it exhibits higher variance and is unstable in GRF. This aligns with our analysis that its reliance on direct reward regression is fragile. The poor performance of the AREL variants further supports this conclusion. The performance of the Uniform baseline is particularly telling, it fails completely in SMACLite, yet is surprisingly effective in GRF, outperforming more complex methods. This highlights that without a robust theoretical grounding, even advanced models can struggle to beat simple heuristics. TAR²'s stable learning across all challenging environments demonstrates the practical benefit of decoupling credit modeling from constraint satisfaction.

\paragraph{Q2: Contextualizing Performance with Oracle Bounds.}
Figure~\ref{fig:bounds} contextualizes TAR²'s performance by comparing it against oracles with access to privileged reward signals. The results show that TAR² learns a highly effective credit assignment strategy using only the sparse Episodic-Team reward. In SMACLite, particularly on the homogeneous maps (\textit{5m vs 6m}, \textit{10m vs 11m}), TAR²'s performance is indistinguishable from an oracle trained on perfect, per-agent dense rewards (Temporal-Agent signal).Even on the heterogeneous map (\textit{3s5z}), TAR² learns a near-optimal credit distribution, significantly outperforming weaker oracle signals. In GRF, TAR² is consistently the best-performing non-oracle method and is even superior to all the oracle heuristics in the \textit{Counter Attack Easy} scenario. These results indicate that TAR² learns a credit distribution that is highly competitive with, and at times better than, what can be achieved with hand-crafted, privileged reward functions.

\paragraph{Q3: Ablation Studies of Architectural Components.}
Our ablation studies (Figure~\ref{fig:ablations}) confirm that each component of our design is critical to its success. Removing Final Outcome Conditioning increases learning variance, which is consistent with our analysis in Sec \ref{subsubsec:variance_reduction_from_final_state}. Ablating the Inverse Dynamics regularizer degrades performance, confirming that grounding the representations in causal actions is crucial \citep{pathak2017curiosity, brandfonbrener2023inversedynamicspretraininglearns}. Learning less meaningful state representations leads to faulty credit assignment and instability, a known failure mode~\citep{kapoor2024assigning} due to imprecise credit assignment. The most significant performance collapse occurs when removing our Deterministic Normalization function. This ablation forces the model to suffer from the same fundamental flaw as AREL and STAS: optimizing an objective with no structural policy preservation guarantee. The resulting instability is a direct consequence of this brittle design. Collectively, these studies provide strong evidence that our components work synergistically to reduce variance, promote causal representations, and structurally guarantee policy preservation.

\section{Limitations and Future Work}
\label{sec:lim_future}
While TAR² establishes a robust framework for episodic credit assignment, several exciting avenues for future work remain. First, the transformer-based architecture may face scalability challenges and could be enhanced by exploring methods like sparse attention or explicit group decomposition to better model agent interactions in massive-scale systems. Second, our framework is currently designed for a single terminal reward; a key next step is to extend TAR² to handle scenarios with multiple sparse rewards within an episode, which would require adapting our formulation. Furthermore, the residual performance variance observed in our results may stem from the implicit exploration encouraged by PBRS \citep{devlin2014potential}. A formal exploration-exploitation analysis could lead to adaptive shaping strategies. A more advanced approach would be to frame this as a bi-level optimization problem, where the reward model is meta-learned to produce shaping signals that directly maximize the downstream performance improvement of the agent policies. Finally, TAR²'s ability to learn from a single, outcome-based signal makes it a prime candidate for training teams of multi-agent Large Language Models (LLMs), a domain where feedback is often sparse.

\section{Conclusion}
\label{sec:conclusion}
We introduced TAR², a method for joint agent-temporal credit assignment designed for structural robustness. By decoupling credit modeling from constraint satisfaction, our approach overcomes the theoretical fragility of prior methods that rely on unreliable reward regression. The method's deterministic normalization step guarantees strict return equivalence by construction, which we prove is equivalent to a valid Potential-Based Reward Shaping (PBRS), ensuring the optimal policy is preserved. Experiments on challenging SMACLite and GRF scenarios show that our approach learns faster and achieves better final performance than state-of-the-art baselines. These results validate our design principle and establish a robust and theoretically-grounded method for credit assignment in complex episodic MARL tasks.

\bibliography{iclr2026_conference}

\begin{thebibliography}{30}
\providecommand{\natexlab}[1]{#1}
\providecommand{\url}[1]{\texttt{#1}}
\expandafter\ifx\csname urlstyle\endcsname\relax
  \providecommand{\doi}[1]{doi: #1}\else
  \providecommand{\doi}{doi: \begingroup \urlstyle{rm}\Url}\fi

\bibitem[Albrecht et~al.(2024)Albrecht, Christianos, and Sch\"afer]{marl-book}
Stefano~V. Albrecht, Filippos Christianos, and Lukas Sch\"afer.
\newblock \emph{Multi-Agent Reinforcement Learning: Foundations and Modern Approaches}.
\newblock MIT Press, 2024.
\newblock URL \url{https://www.marl-book.com}.

\bibitem[Amato(2024)]{amato2024partial}
Christopher Amato.
\newblock (a partial survey of) decentralized, cooperative multi-agent reinforcement learning.
\newblock \emph{arXiv preprint arXiv:2405.06161}, 2024.

\bibitem[Arjona-Medina et~al.(2019)Arjona-Medina, Gillhofer, Widrich, Unterthiner, Brandstetter, and Hochreiter]{arjona2019rudder}
Jose~A Arjona-Medina, Michael Gillhofer, Michael Widrich, Thomas Unterthiner, Johannes Brandstetter, and Sepp Hochreiter.
\newblock Rudder: Return decomposition for delayed rewards.
\newblock \emph{Advances in Neural Information Processing Systems}, 32, 2019.

\bibitem[Brandfonbrener et~al.(2023)Brandfonbrener, Nachum, and Bruna]{brandfonbrener2023inversedynamicspretraininglearns}
David Brandfonbrener, Ofir Nachum, and Joan Bruna.
\newblock Inverse dynamics pretraining learns good representations for multitask imitation, 2023.
\newblock URL \url{https://arxiv.org/abs/2305.16985}.

\bibitem[Chen et~al.(2023)Chen, Zhang, Du, and Yang]{Chen2023STASSR}
Sirui Chen, Zhaowei Zhang, Yali Du, and Yaodong Yang.
\newblock Stas: Spatial-temporal return decomposition for multi-agent reinforcement learning.
\newblock \emph{ArXiv}, abs/2304.07520, 2023.
\newblock URL \url{https://api.semanticscholar.org/CorpusID:258179477}.

\bibitem[Devlin \& Kudenko(2011)Devlin and Kudenko]{Devlin2011TheoreticalCO}
Sam Devlin and Daniel Kudenko.
\newblock Theoretical considerations of potential-based reward shaping for multi-agent systems.
\newblock In \emph{Adaptive Agents and Multi-Agent Systems}, 2011.
\newblock URL \url{https://api.semanticscholar.org/CorpusID:1116773}.

\bibitem[Devlin et~al.(2014)Devlin, Yliniemi, Kudenko, and Tumer]{devlin2014potential}
Sam Devlin, Logan Yliniemi, Daniel Kudenko, and Kagan Tumer.
\newblock Potential-based difference rewards for multiagent reinforcement learning.
\newblock In \emph{Proceedings of the 2014 international conference on Autonomous agents and multi-agent systems}, pp.\  165--172, 2014.

\bibitem[Foerster et~al.(2018)Foerster, Farquhar, Afouras, Nardelli, and Whiteson]{foerster2018counterfactual}
Jakob Foerster, Gregory Farquhar, Triantafyllos Afouras, Nantas Nardelli, and Shimon Whiteson.
\newblock Counterfactual multi-agent policy gradients.
\newblock In \emph{Proceedings of the AAAI conference on artificial intelligence}, volume~32, 2018.

\bibitem[Freed et~al.(2022)Freed, Kapoor, Abraham, Schneider, and Choset]{freed2021learning}
Benjamin Freed, Aditya Kapoor, Ian Abraham, Jeff Schneider, and Howie Choset.
\newblock Learning cooperative multi-agent policies with partial reward decoupling.
\newblock \emph{IEEE Robotics and Automation Letters}, 7\penalty0 (2):\penalty0 890–897, April 2022.
\newblock ISSN 2377-3774.
\newblock \doi{10.1109/lra.2021.3135930}.
\newblock URL \url{http://dx.doi.org/10.1109/LRA.2021.3135930}.

\bibitem[Gangwani et~al.(2020)Gangwani, Zhou, and Peng]{gangwani2020learning}
Tanmay Gangwani, Yuan Zhou, and Jian Peng.
\newblock Learning guidance rewards with trajectory-space smoothing.
\newblock \emph{Advances in Neural Information Processing Systems}, 33:\penalty0 822--832, 2020.

\bibitem[Harutyunyan et~al.(2019)Harutyunyan, Dabney, Mesnard, Gheshlaghi~Azar, Piot, Heess, van Hasselt, Wayne, Singh, Precup, et~al.]{harutyunyan2019hindsight}
Anna Harutyunyan, Will Dabney, Thomas Mesnard, Mohammad Gheshlaghi~Azar, Bilal Piot, Nicolas Heess, Hado~P van Hasselt, Gregory Wayne, Satinder Singh, Doina Precup, et~al.
\newblock Hindsight credit assignment.
\newblock \emph{Advances in neural information processing systems}, 32, 2019.

\bibitem[Kapoor et~al.(2024)Kapoor, Freed, Choset, and Schneider]{kapoor2024assigning}
Aditya Kapoor, Benjamin Freed, Howie Choset, and Jeff Schneider.
\newblock Assigning credit with partial reward decoupling in multi-agent proximal policy optimization.
\newblock \emph{arXiv preprint arXiv:2408.04295}, 2024.

\bibitem[Krnjaic et~al.(2023)Krnjaic, Steleac, Thomas, Papoudakis, Sch\"afer, To, Lao, Cubuktepe, Haley, B\"orsting, and Albrecht]{krnjaic2024scalable}
Aleksandar Krnjaic, Raul~D. Steleac, Jonathan~D. Thomas, Georgios Papoudakis, Lukas Sch\"afer, Andrew Wing~Keung To, Kuan-Ho Lao, Murat Cubuktepe, Matthew Haley, Peter B\"orsting, and Stefano~V. Albrecht.
\newblock Scalable multi-agent reinforcement learning for warehouse logistics with robotic and human co-workers.
\newblock In \emph{IEEE/RSJ International Conference on Intelligent Robots and Systems}, 2023.

\bibitem[Kurach et~al.(2020)Kurach, Raichuk, Sta{\'n}czyk, Zaj{{a}}c, Bachem, Espeholt, Riquelme, Vincent, Michalski, Bousquet, et~al.]{kurach2020google}
Karol Kurach, Anton Raichuk, Piotr Sta{\'n}czyk, Micha{\l} Zaj{{a}}c, Olivier Bachem, Lasse Espeholt, Carlos Riquelme, Damien Vincent, Marcin Michalski, Olivier Bousquet, et~al.
\newblock Google research football: A novel reinforcement learning environment.
\newblock In \emph{Proceedings of the AAAI conference on artificial intelligence}, volume~34, pp.\  4501--4510, 2020.

\bibitem[Liu et~al.(2019)Liu, Luo, Zhong, Chen, Liu, and Peng]{liu2019sequence}
Yang Liu, Yunan Luo, Yuanyi Zhong, Xi~Chen, Qiang Liu, and Jian Peng.
\newblock Sequence modeling of temporal credit assignment for episodic reinforcement learning.
\newblock \emph{arXiv preprint arXiv:1905.13420}, 2019.

\bibitem[Lowe et~al.(2017)Lowe, Tamar, Harb, Pieter~Abbeel, and Mordatch]{lowe2017multi}
Ryan Lowe, Aviv Tamar, Jean Harb, OpenAI Pieter~Abbeel, and Igor Mordatch.
\newblock Multi-agent actor-critic for mixed cooperative-competitive environments.
\newblock \emph{Advances in neural information processing systems}, 30, 2017.

\bibitem[Michalski et~al.(2023)Michalski, Christianos, and Albrecht]{michalski2023smaclite}
Adam Michalski, Filippos Christianos, and Stefano~V. Albrecht.
\newblock {SMAClite}: A lightweight environment for multi-agent reinforcement learning.
\newblock In \emph{AAMAS Workshop on Multiagent Sequential Decision Making Under Uncertainty (MSDM)}, 2023.

\bibitem[Ng(1999)]{ng1999policy}
AY~Ng.
\newblock Policy invariance under reward transformations: Theory and application to reward shaping.
\newblock In \emph{Proceedings of the 16th International Conference on Machine Learning}, pp.\  278, 1999.

\bibitem[Oliehoek \& Amato(2016)Oliehoek and Amato]{Oliehoek2016ACI}
Frans~A. Oliehoek and Chris Amato.
\newblock A concise introduction to decentralized pomdps.
\newblock In \emph{SpringerBriefs in Intelligent Systems}, 2016.
\newblock URL \url{https://api.semanticscholar.org/CorpusID:3263887}.

\bibitem[Papoudakis et~al.(2021)Papoudakis, Christianos, Sch\"afer, and Albrecht]{papoudakis2021benchmarking}
Georgios Papoudakis, Filippos Christianos, Lukas Sch\"afer, and Stefano~V. Albrecht.
\newblock Benchmarking multi-agent deep reinforcement learning algorithms in cooperative tasks.
\newblock In \emph{Proceedings of the Neural Information Processing Systems Track on Datasets and Benchmarks (NeurIPS)}, 2021.

\bibitem[Pathak et~al.(2017)Pathak, Agrawal, Efros, and Darrell]{pathak2017curiosity}
Deepak Pathak, Pulkit Agrawal, Alexei~A. Efros, and Trevor Darrell.
\newblock Curiosity-driven exploration by self-supervised prediction, 2017.
\newblock URL \url{https://arxiv.org/abs/1705.05363}.

\bibitem[Rashid et~al.(2020)Rashid, Samvelyan, De~Witt, Farquhar, Foerster, and Whiteson]{rashid2020monotonic}
Tabish Rashid, Mikayel Samvelyan, Christian~Schroeder De~Witt, Gregory Farquhar, Jakob Foerster, and Shimon Whiteson.
\newblock Monotonic value function factorisation for deep multi-agent reinforcement learning.
\newblock \emph{Journal of Machine Learning Research}, 21\penalty0 (178):\penalty0 1--51, 2020.

\bibitem[Samvelyan et~al.(2019)Samvelyan, Rashid, De~Witt, Farquhar, Nardelli, Rudner, Hung, Torr, Foerster, and Whiteson]{samvelyan2019starcraft}
Mikayel Samvelyan, Tabish Rashid, Christian~Schroeder De~Witt, Gregory Farquhar, Nantas Nardelli, Tim~GJ Rudner, Chia-Man Hung, Philip~HS Torr, Jakob Foerster, and Shimon Whiteson.
\newblock The starcraft multi-agent challenge.
\newblock \emph{arXiv preprint arXiv:1902.04043}, 2019.

\bibitem[Sartoretti et~al.(2019)Sartoretti, Kerr, Shi, Wagner, Kumar, Koenig, and Choset]{sartoretti2019primal}
Guillaume Sartoretti, Justin Kerr, Yunfei Shi, Glenn Wagner, TK~Satish Kumar, Sven Koenig, and Howie Choset.
\newblock Primal: Pathfinding via reinforcement and imitation multi-agent learning.
\newblock \emph{IEEE Robotics and Automation Letters}, 4\penalty0 (3):\penalty0 2378--2385, 2019.

\bibitem[Schäfer et~al.(2022)Schäfer, Christianos, Hanna, and Albrecht]{schaefer2022derl}
Lukas Schäfer, Filippos Christianos, Josiah~P. Hanna, and Stefano~V. Albrecht.
\newblock Decoupled reinforcement learning to stabilise intrinsically-motivated exploration.
\newblock In \emph{International Conference on Autonomous Agents and Multiagent Systems (AAMAS)}, 2022.

\bibitem[Sunehag et~al.(2017)Sunehag, Lever, Gruslys, Czarnecki, Zambaldi, Jaderberg, Lanctot, Sonnerat, Leibo, Tuyls, et~al.]{sunehag2017value}
Peter Sunehag, Guy Lever, Audrunas Gruslys, Wojciech~Marian Czarnecki, Vinicius Zambaldi, Max Jaderberg, Marc Lanctot, Nicolas Sonnerat, Joel~Z Leibo, Karl Tuyls, et~al.
\newblock Value-decomposition networks for cooperative multi-agent learning.
\newblock \emph{arXiv preprint arXiv:1706.05296}, 2017.

\bibitem[Vinyals et~al.(2019)Vinyals, Babuschkin, Czarnecki, Mathieu, Dudzik, Chung, Choi, Powell, Ewalds, Georgiev, et~al.]{Vinyals2019GrandmasterLI}
Oriol Vinyals, Igor Babuschkin, Wojciech~M. Czarnecki, Micha{\"e}l Mathieu, Andrew Dudzik, Junyoung Chung, David Choi, Richard Powell, Timo Ewalds, Petko Georgiev, et~al.
\newblock Grandmaster level in starcraft ii using multi-agent reinforcement learning.
\newblock \emph{Nature}, 575:\penalty0 350 -- 354, 2019.
\newblock URL \url{https://api.semanticscholar.org/CorpusID:204972004}.

\bibitem[Wang et~al.(2020)Wang, Zhang, Kim, and Gu]{wang2020shapley}
Jianhong Wang, Yuan Zhang, Tae-Kyun Kim, and Yunjie Gu.
\newblock Shapley q-value: A local reward approach to solve global reward games.
\newblock In \emph{Proceedings of the AAAI Conference on Artificial Intelligence}, volume~34, pp.\  7285--7292, 2020.

\bibitem[Xiao et~al.(2022)Xiao, Ramasubramanian, and Poovendran]{xiao2022agent}
Baicen Xiao, Bhaskar Ramasubramanian, and Radha Poovendran.
\newblock Agent-temporal attention for reward redistribution in episodic multi-agent reinforcement learning.
\newblock \emph{arXiv preprint arXiv:2201.04612}, 2022.

\bibitem[Yu et~al.(2022)Yu, Velu, Vinitsky, Gao, Wang, Bayen, and Wu]{yu2022surprising}
Chao Yu, Akash Velu, Eugene Vinitsky, Jiaxuan Gao, Yu~Wang, Alexandre Bayen, and Yi~Wu.
\newblock The surprising effectiveness of ppo in cooperative multi-agent games.
\newblock \emph{Advances in Neural Information Processing Systems}, 35:\penalty0 24611--24624, 2022.

\end{thebibliography}
\bibliographystyle{iclr2026_conference}

\appendix

\section{Detailed Task Descriptions}
\label{sec:detailed_task_description}
\paragraph{SMACLite \citep{michalski2023smaclite}}
A computationally efficient variant of StarCraft II \citep{samvelyan2019starcraft}. We experiment on three battle scenarios with varying complexity:
\begin{description}
    \item[5m\_vs\_6m \& 10m\_vs\_11m] Homogeneous scenarios testing scalability.
    \item[3s5z] A heterogeneous scenario with 3 Stalkers and 5 Zealots, testing coordination between different unit types.
\end{description}
Each agent’s local observation includes the relative positions, unit types, health, and shield strength of allies and enemies within its field of view, as well as its own status. Agents can move, stop, or attack any visible enemy. The environment provides action masks for valid actions. Each combat scenario lasts for 100 timesteps. The environment’s reward function combines partial rewards for damaging or eliminating enemies, with a maximum possible team return normalized to 20. The repository is available at: \url{https://github.com/uoe-agents/smaclite} (MIT License).

\paragraph{Google Research Football (GRF) \citep{kurach2020google}}
A high-fidelity multi-agent football (soccer) simulation. We evaluate on three scenarios:
\begin{description}
    \item[academy\_3\_vs\_1\_with\_keeper] A basic offensive scenario.
    \item[academy\_counterattack\_easy] Tests rapid transitions from defense to offense.
    \item[academy\_pass\_and\_shoot\_with\_keeper] Requires precise passing and coordination to score.
\end{description}
The observation space (`simple115v2`) includes player positions, ball coordinates, velocity vectors, and stamina. The action space includes passing, shooting, dribbling, and tackling. Episodes end after a goal or 200 timesteps. Reward signals are sparse, tied to events like scoring goals or advancing the ball. The repository is available at: \url{https://github.com/google-research/football} (Apache License 2.0).

\section{Implementation Details and Hyperparameters}
\label{sec:hyperparameters}

The code was run on Lambda Labs deep learning workstation with 2-4 Nvidia RTX 2080 Ti graphics cards.  Each training run was run on one single GPU, and required approximately 10 hours. 

 Hyperparameters used for TAR$^2$, STAS, AREL-Temporal, AREL-Agent-Temporal, Uniform and various environment reward configurations that are common to all tasks are shown in Tables~\ref{tab:mappo_common_hyperparams}. The task-specific hyperparameters considered in our grid search for TAR$^2$, STAS, AREL-variants in Tables \ref{table:tar^2_hyperparameter_sweep}, \ref{table:stas_hyperparameter_sweep} and \ref{table:arel_hyperparameter_sweep} respectively. Bold values indicate the optimal hyperparameters.

\begin{table}[H]
\centering
\caption{Common Hyperparameters for MAPPO algorithms.}
\begin{tabular}{|c|c|}
\hline
\textbf{\begin{tabular}[c]{@{}c@{}}common\\ hyperparameters\end{tabular}} & \textbf{value} \\ \hline
ppo\_epochs         & 15 \\   \hline
ppo\_batch\_size      & 30 \\   \hline
gamma               & 0.99 \\ \hline
max\_episodes       & 30000 \\ \hline
max\_time\_steps    & 100 \\  \hline
rnn\_num\_layers\_v    & 1 \\  \hline
rnn\_hidden\_v      & 64  \\  \hline
v\_value\_lr          & 5e-4  \\  \hline
v\_weight\_decay       & 0.0 \\ \hline
v\_hidden\_shape    & 64  \\  \hline
grad\_clip\_critic\_v  & 0.5  \\  \hline
value\_clip         & 0.2  \\  \hline
data\_chunk\_length & 10  \\  \hline
rnn\_num\_layers\_actor    & 1 \\  \hline
rnn\_hidden\_actor      & 64  \\  \hline
policy\_lr          & 5e-4  \\  \hline
policy\_weight\_decay       & 0.0 \\ \hline
policy\_hidden\_shape    & 64  \\  \hline
grad\_clip\_actor  & 0.5  \\  \hline
policy\_clip         & 0.2  \\  \hline
entropy\_pen       & 1e-2 \\  \hline
gae\_lambda         & 0.95 \\ \hline
\end{tabular}

\label{tab:mappo_common_hyperparams}
\end{table}

\begin{table}[!h]
\centering
\caption{TAR$^2$ hyperparameters.}
\label{table:tar^2_hyperparameter_sweep}
\footnotesize  % Smaller font size for table content
\begin{tabularx}{\textwidth}{|*{14}{>{\centering\arraybackslash}X|}}
\hline
\textbf{Env. Name} & \textbf{num heads} & \textbf{depth} & \textbf{dropout} & \textbf{comp. dim} & \textbf{batch size} & \textbf{lr} & \textbf{weight decay} & \textbf{inv. dyn. loss coef.} & \textbf{grad clip val.} & \textbf{model upd. freq.} & \textbf{model upd. epochs} & \textbf{policy lr} & \textbf{entropy coef} \\
\hline
Google Football & [3, \textbf{4}] & [\textbf{3}, 4] & [\textbf{0.0}, 0.1, 0.2] & [16, \textbf{64}, 128] & [32, 64, \textbf{128}] & [1e-4, \textbf{5e-4}, 1e-3] & [\textbf{0.0}, 1e-5, 1e-4] & [1e-3, 1e-2, \textbf{5e-2}] & [0.5, 5.0, \textbf{10.0}] & [50, 100, \textbf{200}] & [100, \textbf{200}, 400] & [5e-4, \textbf{1e-3}] & [5e-3, 8e-3, \textbf{1e-2}] \\
\hline
SMACLite & [3, \textbf{4}] & [\textbf{3}, 4] & [\textbf{0.0}, 0.1, 0.2] & [16, \textbf{64}, 128] & [32, 64, \textbf{128}] & [1e-4, \textbf{5e-4}, 1e-3] & [\textbf{0.0}, 1e-5, 1e-4] & [1e-3, 1e-2, \textbf{5e-2}] & [0.5, 5.0, \textbf{10.0}] & [50, 100, \textbf{200}] & [100, \textbf{200}, 400] & [5e-4, \textbf{1e-3}] & [\textbf{5e-3}, 8e-3, 1e-2] \\
\hline
\end{tabularx}
\end{table}

\begin{table}[!h]
\centering
\caption{STAS hyperparameters.}
\label{table:stas_hyperparameter_sweep}
\footnotesize  % Smaller font size for table content
\begin{tabularx}{\textwidth}{|*{14}{>{\centering\arraybackslash}X|}}
\hline
\textbf{Env. Name} & \textbf{num heads} & \textbf{depth} & \textbf{dropout} & \textbf{comp. dim} & \textbf{batch size} & \textbf{lr} & \textbf{weight decay} & \textbf{grad clip val.} & \textbf{model upd. freq.} & \textbf{model upd. epochs} \\
\hline
Google Football & [3, \textbf{4}] & [\textbf{3}, 4] & [\textbf{0.0}, 0.1, 0.2] & [16, \textbf{64}, 128] & [32, 64, \textbf{128}] & [1e-4, \textbf{5e-4}, 1e-3] & [\textbf{0.0}, 1e-5, 1e-4] & [0.5, 5.0, \textbf{10.0}] & [50, 100, \textbf{200}] & [100, \textbf{200}, 400] \\
\hline
SMACLite & [3, \textbf{4}] & [\textbf{3}, 4] & [\textbf{0.0}, 0.1, 0.2] & [16, \textbf{64}, 128] & [32, 64, \textbf{128}] & [1e-4, \textbf{5e-4}, 1e-3] & [\textbf{0.0}, 1e-5, 1e-4] & [0.5, 5.0, \textbf{10.0}] & [50, 100, \textbf{200}] & [100, \textbf{200}, 400] \\
\hline
\end{tabularx}
\end{table}

\begin{table}[!h]
\centering
\caption{AREL hyperparameters.}
\label{table:arel_hyperparameter_sweep}
\footnotesize  % Smaller font size for table content
\begin{tabularx}{\textwidth}{|*{14}{>{\centering\arraybackslash}X|}}
\hline
\textbf{Env. Name} & \textbf{num heads} & \textbf{depth} & \textbf{dropout} & \textbf{comp. dim} & \textbf{batch size} & \textbf{lr} & \textbf{weight decay} & \textbf{grad clip val.} & \textbf{model upd. freq.} & \textbf{model upd. epochs} \\
\hline
Google Football & [3, \textbf{4}] & [\textbf{3}, 4] & [\textbf{0.0}, 0.1, 0.2] & [16, \textbf{64}, 128] & [32, 64, \textbf{128}] & [1e-4, \textbf{5e-4}, 1e-3] & [\textbf{0.0}, 1e-5, 1e-4] & [0.5, 5.0, \textbf{10.0}] & [50, 100, \textbf{200}] & [100, \textbf{200}, 400] \\
\hline
SMACLite & [3, \textbf{4}] & [\textbf{3}, 4] & [\textbf{0.0}, 0.1, 0.2] & [16, \textbf{64}, 128] & [32, 64, \textbf{128}] & [1e-4, \textbf{5e-4}, 1e-3] & [\textbf{0.0}, 1e-5, 1e-4] & [0.5, 5.0, \textbf{10.0}] & [50, 100, \textbf{200}] & [100, \textbf{200}, 400] \\
\hline
\end{tabularx}
\end{table}

\section{Pseudocode}
\label{sec:pseudocode}
Our training process is detailed in the following algorithms. Algorithm~\ref{alg:main_loop} describes the main on-policy training loop which collects data and updates the MAPPO actor and critic policies. Algorithm~\ref{alg:tar2_training} describes the periodic, off-policy training of the TAR² reward model.

\begin{algorithm}[h!]
\caption{Main Training Loop: MAPPO with TAR² Rewards}
\label{alg:main_loop}
\begin{algorithmic}[1]
\STATE \textbf{Initialize:} Policy nets $\pi_{\omega_i}$, critic nets $V_{\mu_i}$ for each agent $i=1..N$.
\STATE \textbf{Initialize:} TAR² reward model $R_{\theta}$, experience buffer $\mathcal{B} \leftarrow \emptyset$.
\STATE \textbf{Initialize:} PopArt parameters for value normalization.

\WHILE{not converged}
    \STATE Initialize temporary data buffer $D \leftarrow \emptyset$.
    \FOR{$k = 1$ to \textit{num\_rollout\_threads}}
        \STATE Initialize actor RNN hidden states $\mathbf{h}_{0, \pi}$ and critic RNN hidden states $\mathbf{h}_{0, V}$.
        \STATE Initialize empty trajectory storage $\tau_{\text{storage}} \leftarrow []$.
        \FOR{$t = 0$ to $T-1$}
            \STATE Get joint observation $\mathbf{o}_t$.
            \FOR{each agent $i=1..N$}
                \STATE Sample action $a_{i,t}$ and get next hidden state $h_{t+1, \pi}^{(i)}$ from policy:
                \STATE $a_{i,t}, h_{t+1, \pi}^{(i)} \leftarrow \pi_{\omega_i}(o_{i,t}, h_{t, \pi}^{(i)})$.
                \STATE Get state value $v_{i,t}$ and next hidden state $h_{t+1, V}^{(i)}$ from critic:
                \STATE $v_{i,t}, h_{t+1, V}^{(i)} \leftarrow V_{\mu_i}(\mathbf{s}_t, h_{t, V}^{(i)})$, where $\mathbf{s}_t$ is the centralized state representation.
            \ENDFOR
            \STATE Execute joint action $\mathbf{a}_t$, observe next joint observation $\mathbf{o}_{t+1}$.
            \STATE Store transition $(\mathbf{o}_t, \mathbf{a}_t, \mathbf{h}_{t+1, \pi}, \mathbf{h}_{t+1, V}, \{v_{i,t}\}_{i=1}^N)$ in $\tau_{\text{storage}}$.
        \ENDFOR
        \STATE Receive the true episodic team reward $R(s_T)$.
        \STATE Store the completed trajectory $(\tau_{\text{storage}}, R(s_T))$ in the long-term experience buffer $\mathcal{B}$.

        \STATE // --- On-Policy Return and Advantage Calculation ---
        \STATE Compute shaped rewards $\{r_{i,t}\}$ for the trajectory using the TAR² model $R_{\theta}$ and $R(s_T)$.
        \STATE Compute shaped returns $\{G_{i,t}\}$ for each agent using the shaped rewards $\{r_{i,t}\}$.
        \STATE Update PopArt statistics with the shaped returns $\{G_{i,t}\}$.
        \STATE Normalize returns with PopArt.
        \STATE De-normalize value estimates $\{v_{i,t}\}$ using PopArt.
        \STATE Compute advantage estimates $\{\hat{A}_{i,t}\}$ for each agent using GAE.
        \STATE Split trajectory $\tau$ into chunks of length $L$
        \FOR{$l = 0, 1, \ldots, T//L$}
            \STATE $D = D \cup (\tau[l : l + T], \hat{A}[l : l + L], G[l : l + L], \bar{G}[l : l + L])$
        \ENDFOR
        \STATE Add processed chunks to the data buffer $D$.
    \ENDFOR

    \STATE // --- On-Policy Policy and Critic Updates ---
    \FOR{$e = 1$ to \textit{ppo\_epochs}}
        \FOR{mini-batch $b$ sampled from $D$}
            \STATE Update policy parameters $\boldsymbol{\omega}$ using the MAPPO policy loss on advantage estimates from batch $b$.
            \STATE Update critic parameters $\boldsymbol{\mu}$ by regressing on the PopArt-normalized shaped returns from batch $b$.
        \ENDFOR
    \ENDFOR

    \STATE // --- Off-Policy Reward Model Update ---
    \IF{training condition met}
        \STATE Update TAR² reward model parameters $\theta$ using Algorithm~\ref{alg:tar2_training}.
    \ENDIF
\ENDWHILE
\end{algorithmic}
\end{algorithm}

\begin{algorithm}[h!]
\caption{Training the TAR² Reward Model (Off-Policy)}
\label{alg:tar2_training}
% This algorithm remains the same as it correctly depicts the off-policy update from the buffer.
\begin{algorithmic}[1]
\STATE \textbf{Input:} Reward model parameters $\theta$, experience buffer $\mathcal{B}$, batch size $B_r$, learning rate $\alpha_R$.
\FOR{$u = 1$ to \textit{num\_reward\_updates}}
    \STATE Sample a batch of trajectories $\{(\tau_j, R(s_T)_j)\}_{j=1}^{B_r}$ from the long-term buffer $\mathcal{B}$.
    \STATE Initialize loss $\mathcal{L}(\theta) \leftarrow 0$.
    \FOR{each trajectory $(\tau, R(s_T))$ in the batch}
        \STATE Compute unnormalized scores $\{c_{i,t}\}$ and predicted actions $\{\hat{a}_{i,t}\}$ from the reward model $R_{\theta}(\tau)$.
        \STATE $\mathcal{L}_{\text{reg}} = \left( \log R(s_T) - \sum_{t,i} c_{i,t} \right)^2$.
        \STATE $\mathcal{L}_{\text{ID}} = - \sum_{t,i} \log \pi_{\text{ID}}(a_{i,t} | \hat{a}_{i,t})$.
        \STATE $\mathcal{L}(\theta) \mathrel{+}= \mathcal{L}_{\text{reg}} + \lambda \mathcal{L}_{\text{ID}}$.
    \ENDFOR
    \STATE Update $\theta$ using Adam optimizer: $\theta \leftarrow \theta - \alpha_R \nabla_{\theta} \mathcal{L}(\theta)$.
\ENDFOR
\end{algorithmic}
\end{algorithm}

\subsubsection{Inverse Dynamics for Causal Representation Learning}
\label{subsubsec:inverse_dynamics_implementation}

To ensure that the learned representations are grounded in agent behavior, we integrate an auxiliary inverse dynamics task into our framework. The goal of this task is to regularize the shared embeddings by forcing them to encode information about the actions that cause transitions. Our implementation is designed to leverage the rich, contextualized embeddings produced by our main architecture.

Our credit assignment model uses a dual temporal-agent attention network to produce global state-action embeddings for each timestep. These embeddings capture not only the state of the environment but also the interactions between agents over time. To form the input for our inverse dynamics prediction, for each agent $i$ at each timestep $t$, we create a concatenated vector, $\mathbf{z}_{i,t}$, from three distinct sources:
\begin{enumerate}
    \item The global state embedding at the current timestep, $\text{embed}(s_t)$.
    \item The global state embedding at the next timestep, $\text{embed}(s_{t+1})$.
    \item The agent-specific state-action embedding from the \textit{previous} timestep, $\text{embed}(s_{i,t-1}, a_{i,t-1})$, which is the output of the dual attention block for that agent.
\end{enumerate}
This concatenated vector $\mathbf{z}_{i,t}$ is then passed through a multi-layer perceptron (MLP), which we refer to as the inverse dynamics head.

Crucially, the network is not trained to predict the action that was actually executed, but rather the action that the agent's policy, $\pi_i$, predicted at that timestep. The objective is to minimize the cross-entropy between the output of the inverse dynamics head and the action probabilities from the policy network. This encourages consistency between the representations used for credit assignment and the representations used for action selection, ensuring the credit is assigned based on features that are directly relevant to the agent's decision-making process.

\section{Interpretability and Insights}
\label{sec:interpretability}

Although our primary focus is on performance and theoretical properties, TAR\(^2\)’s per-timestep, per-agent reward predictions lend themselves to partial interpretability:
\begin{itemize}
    \item Agents’ importance at specific timesteps can be visualized by examining \(w^{\text{temporal}}_t w^{\text{agent}}_{i,t}\).
    \item Comparing predicted reward distributions across different episodes can hint at consistent agent roles or strategic pivot points in the trajectory.
\end{itemize}
However, direct interpretability is challenging in high-dimensional multi-agent environments like SMACLite and Google Football, where the intricate interactions and vast state-action spaces complicate simple visualizations. Additionally, developing a systematic interpretability study would require significant additional methodologies and resources, extending beyond the scope of the current work. While we recognize the importance of interpretability and plan to explore it in future research, our current focus remains on establishing robust performance improvements and theoretical guarantees for reward redistribution.

\end{document}